\def\BibTeX{{\rm B\kern-.05em{\sc i\kern-.025em b}\kern-.08em
    T\kern-.1667em\lower.7ex\hbox{E}\kern-.125emX}}
\newtheorem{cor}{Corollary}
\crefname{cor}{corollary}{corollaries}
\newtheorem{pro}{Proposition}
\crefname{pro}{proposition}{propositions}
\def\url@leostyle{%
  \@ifundefined{selectfont}{\def\UrlFont{\sf}}{\def\UrlFont{\small\ttfamily}}}
\begin{document}

\title{Consistent User-Traffic Allocation and Load Balancing in Mobile Edge Caching}

\author{\IEEEauthorblockN{Lemei Huang\IEEEauthorrefmark{1},
Sheng Cheng\IEEEauthorrefmark{1},
Yu Guan\IEEEauthorrefmark{1}\IEEEauthorrefmark{2},
Xinggong Zhang\IEEEauthorrefmark{1}\IEEEauthorrefmark{2} and
Zongming Guo\IEEEauthorrefmark{1}\IEEEauthorrefmark{2}}
\IEEEauthorblockA{\IEEEauthorrefmark{1}Institute of Computer Science Technology, Peking University, China, 100871}
\IEEEauthorblockA{\IEEEauthorrefmark{2}PKU-UCLA Joint Research Institute in Science and Engineering}
\IEEEauthorblockA{\{milliele, chaser\_wind, shanxigy, zhangxg, guozongming\}@pku.edu.cn}}

\maketitle

\begin{abstract}
Cache-equipped Base-Stations (CBSs) is an attractive alternative to offload the rapidly growing backhaul traffic in a mobile network. New 5G technology and dense femtocell enable one user to connect to multiple base-stations simultaneously. Practical implementation requires the caches in BSs to be regarded as a cache server, but few of the existing works considered how to offload traffic, or how to schedule HTTP requests to CBSs. In this work, we propose a DNS-based HTTP traffic allocation framework. It schedules user traffic among multiple CBSs by DNS resolution, with the consideration of load-balancing, traffic allocation consistency and scheduling granularity of DNS. To address these issues, we formulate the user-traffic allocation problem in DNS-based mobile edge caching, aiming at maximizing QoS gain and allocation consistency while maintaining load balance. Then we present a simple greedy algorithm which gives a more consistent solution when user-traffic changes dynamically. Theoretical analysis proves that it is within 3/4 of the optimal solution. Extensive evaluations in numerical and trace-driven situations show that the greedy algorithm can avoid about 50\% unnecessary shift in user-traffic allocation, yield more stable cache hit ratio and balance the load between CBSs without losing much of the QoS gain.
\end{abstract}

\begin{IEEEkeywords}
Mobile Edge Caching, Traffic Allocation
\end{IEEEkeywords}

\vspace{-0.2cm}

\section{Introduction}

Mobile traffic grows explosively in recent years and is likely to increase seven-fold between 2017 and 2022\cite{vni}. Cache-equipped Base-Station (CBS) is an attractive alternative to offload the backhaul traffic, especially for large scale multimedia services. Besides, it can improve user Quality of Service (QoS) significantly by reducing content-fetching latency \cite{cbs2}. 

With the development of mobile technologies (e.g.\ 5G), small BSs (e.g.\ Femto BSs) are densely distributed in an network area and a user can be associated to multiple BSs in the neighborhood (rather than just the closest one), as depicted in FemtoCacing \cite{femto2012}. Therefore, it is necessary to decide how user requests should be routed, a.k.a.\ user-traffic allocation, and CBSs can serve user traffic collaboratively.
Plenty of works have studied user-traffic allocation problems in collaborative mobile edge caching \cite{femto2012,delay1,infocom18}, while few of them consider the practical implementations of a CBS. To incrementally deploy cache in a BS without modifying the rest of the Internet infrastructure, caches need to support application-level protocols, which means equipped cache in a BS should still be treated as a \emph{cache server} \cite{practical}
\footnote{The majority of mobile traffic, especially video traffic, are transmitted by application-level protocol, e.g.\ HTTP, RTP, etc.}.
DNS resolution has been widely used in user-traffic allocation to cache servers \cite{akamai2012,dnsoverview,dns2}. DNS servers respond with the location of caches or source server according to its traffic allocation policy when users issue content requests. However, DNS-based user-traffic allocation faces two main challenges:

\begin{figure}[tb]
\centering
\includegraphics[width=0.4\textwidth]{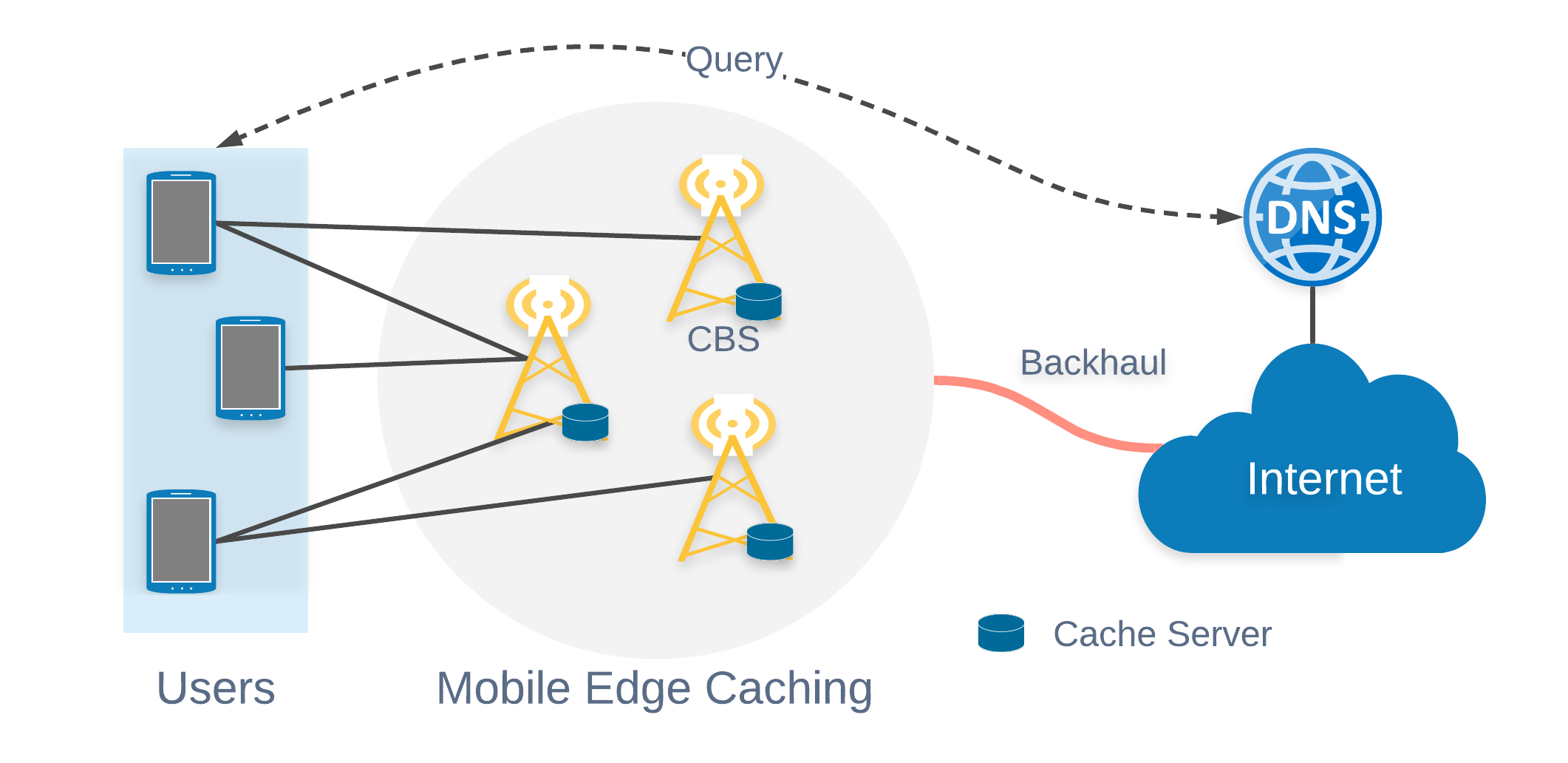}
\caption{\textbf{Mobile edge caching: }Cache-equipped base stations (CBSs) are densely distributed in a network area. A user can be associated with multiple CBSs close enough. Caches equipped on CBSs are served as cache servers\cite{practical}, user traffic is allocated to them by DNS resolution. 
\label{framework}}
\end{figure}

\textbf{Allocation granularity. }DNS-based traffic allocation is realized by carefully determining an IP list of available caches. For example, if the traffic allocated to cache $A$ and $B$ is $3:1$, the IP list may contain 3 identical IP addresses of $A$ and 1 IP address of $B$ \cite{rr}. The size of this IP list, however, is often limited, as DNS responses are normally propagated in only one UDP packet \cite{dnsoverview} with a max-length of 512 Bytes \cite{512}
\footnote{Although TCP protocol or EDNS option can be used to expand the packet length, it requires extra support of the application.}.
Therefore, one of the practical concerns omitted in many theoretical models is that user traffic cannot be distributed in an arbitrary proportion \cite{delay1,infocom18,discrete}. Although the solution to these models can be adjusted to applied in DNS resolution, it may lead to a suboptimal policy. We believe it is better to take into account the allocation granularity when formulating the user-traffic allocation problem.

\textbf{Allocation Consistency. }Although most of the existing works \cite{femto2012,delay1,infocom18,discrete} aim to jointly solve the content placement and user-traffic allocation problem, we suppose that reactive caching (e.g.\ LRU cache) is applied in CBSs to cope with dynamic user traffic and content popularity in DNS-based scenarios\footnote{We make this assumption for two reasons. First, in most the cases, DNS server knows only the domain name of the requested content rather than the detailed URL. Second, if content placement remains unchanged, cache utility will degrade when content popularity changes, and frequently recalculating content placement policy is computationally unacceptable.}.
The cache of each CBS is large enough to cache most of the popular contents in the user traffic \cite{femto2012}, but lacking the capacity to cache all the contents in the category. In a mobile network, both user traffic and user location change dynamically. Therefore, user-traffic allocation policy needs to be adaptive and ever-changing (see \Cref{eva}).
"Inconsistent" traffic allocation may result in severely high cache miss ratio, which reduces the utility of a CBS. For example, suppose traffic belonging to \verb|Y.com| is reallocated from cache $A$ to $B$. $B$ is very unlikely to cache contents under \verb|Y.com|, since it does not serve \verb|Y.com| before. Therefore, $B$ must fetch these contents from the Internet, which may lead to cache misses, and cache them to serve subsequent user requests for \verb|Y.com|.
It becomes harmful when the strategy is so "sensitive" that it comes up with a very "inconsistent" (i.e.\ totally different) traffic allocation policy just to attain a trivial estimated gain in QoS after recalculation.

In this paper, we focused on the scenarios where users can be associated with multiple close enough CBSs, each CBS equipped with a cache server, and user traffic is allocated by DNS resolution (\Cref{framework}). 
In summary, the main contributions of this paper are as follows:
\begin{itemize}[leftmargin=*]
\item We present and formulate user-traffic allocation (UTA) problem (\Cref{dlsbformulation}), the problem of how can user traffic be allocated to CBSs considering the allocation granularity of DNS. In UTA problem, we want to maximize QoS gain, balance the load, and keep the allocation policy consistent.
\item We prove that the formulation can be transformed equivalently into a typical combinatorics optimization and then presented a simple greedy algorithm with 3/4 approximation ratio (\Cref{theoretical}).
\item We provide extensive evaluations in both numerical and trace-driven situations (\Cref{eva}). The results show that our algorithm yields more consistent traffic allocation while maintaining the QoS gain and balancing the load.
\end{itemize}

\section{Related Work}
\label{relatedwork}

\textbf{User-traffic allocation in mobile edge caching. }
In \cite{femto2012}, traffic allocation depends on the placement of the content; user requests can be allocated to any neighboring CBS caching the content. \cite{delay1} further examines the delay between users and CBSs to optimize user QoS. However, as CBSs are expected an unlimited traffic capacity, user traffic is usually allocated to the nearest CBS with the requested content, which is likely to overburden a CBS. \cite{band1,band2,band3,discrete} allocate user traffic considering the upper bound of traffic that a CBS can serve and avoid QoS degradation when a CBS caching popular contents is submerged by user requests and thus overburdened. In their formulation, however, no matter whether the variables of traffic allocation are continuous \cite{band1} or not \cite{discrete}, user traffic could be distributed in arbitrary portion, which can not be implemented due to the allocation granularity of DNS resolution.

\textbf{Consistent user-traffic allocation. }
Consistent hashing \cite{consistenthashing} has long been used by cache networks to balance server load with minimum allocation changes. Jiang et.al. formulated a traffic allocation problem in CDNs comprising distributed caches (e.g. set-top boxes) in \cite{discdn}. Their algorithm updates traffic allocation policy smoothly. While these algorithms suit clusters consisting of caches with identical characteristics, our work targets a different problem where CBSs are often heterogeneous. Even in the context of wired content delivery, few of the works considered allocation consistency in such a problem. To the best of our knowledge, in 2015, Akamai transformed the load balancing problem into a variant of the stable marriage one and presented their generalized Gale-Shapley (GGS) algorithm for traffic allocation \cite{akamai2012}. They pointed out the importance of allocation consistency but did not present a detailed solution. GSS algorithm can allocate user traffic when caches are heterogeneous. However, optimality of this solution lacks theoretical analysis as well.


\textbf{Practical implementations of CBS. }
An example of practical implementations of CBS is presented in \cite{practical}. The equipped cache of a Base Station (BS) node (i.e. eNodeB) is installed as a cache server in its kernel, and modification is made on the protocol stack of BS. In this way, the BS node can act as not only a gateway but also a router that can forward IP packets. IP packets whose destination is the equipped cache server are captured and served by cache, others forwarded to the Serving Gateway (S-GW) in Evolved Packet Core (EPC). Thus, cache on BS can support application-level data flow. Content requests are allocated to each cache by DNS resolution. However, user-traffic allocation or other optimization problems in mobile edge caching remain understudied in \cite{practical}.



\section{Network Model and Problem Statement}
\label{dlsbformulation}

\subsection{Wireless Environment and Cache Network}

\subsubsection*{CBSs and User Association}
Each CBS covers a communication range with a specific radius in a 2D plane. Users within such range can be associated with it. Thanks to the dense distribution of CBSs, users can have multiple association alternatives when CBSs' communication ranges overlap. We do not consider multi-source downloading, and therefore each content request can only be assigned to at most one CBS.

Let there be a group of CBSs $\mathbf{M}$ in the network area. The amount of user traffic that a CBS can handle has an upper bound, a.k.a.\ \emph{capacity}. A CBS are probably overburdened when it receives more traffic than its capacity. We use $c_j$ to denote the capacity of each CBS $j\in\mathbf{M}$.

\subsubsection*{User Traffic}
To cope with the high computational complexity caused by user quantity in the network area, users is grouped into User Groups (UGs) $\mathbf{U}$. In each UG, it is assumed that all users locate in neighboring locations and experience the same radio conditions regarding fading and interference. Therefore, they share the same group of associable CBSs. For example, users in the same building could be grouped together.
User traffic from each UG $u\in\mathbf{U}$ are grouped before allocation to CBSs. Let $\mathbf{S}$ be the category of domain names. Thus, user traffic can be divided into different flows. Each flow $i=\langle u,s \rangle \in \mathbf{F}=\mathbf{U}\times\mathbf{S}$ comprising content requests from UG $u$ under domain name $s$. The amount of user traffic in flow $i$ is denoted by $\lambda_i$ (see \verb|Flow| layer in \Cref{overview}).

We denote the connectivity between users and CBSs by $t_{ij}\in\{0,1\}$. For $i=\langle u,s\rangle$, $t_{ij}=1$ when UG $u$ can be associated with CBS $j$ and otherwise $t_{ij}=0$. $\{t_{ij}\}$ are also constant variables identical for all the sub-flows of flow $i$.

\subsubsection*{Mapping Units}
Considering the granularity in DNS-based traffic allocation, we assume that each flow $i$ is divided evenly into $|\mathbf{K_i}|$ sub-flows a prior, where each sub-flow carries an equal amount of traffic $\lambda_0$, applying certain rounding methods. Since the division is uniform, we can assume that in the same flow, all the sub-flows are identical regarding the category of requested contents, user group, and domain name. The long tail effect of content popularity distribution may generate a lot of small flows whose amount of traffic is less than $\lambda_0$. In practice, those flows in the same or nearby regions can be merged into one flow with the amount approximate to $\lambda_0$ to take full advantage of the CBS capacity. Suppose the maximum length of IP list is $\iota $, $\lambda_0$ can be set up as $1/\iota $ of the maximal traffic amount of all the flows, which ensures that $|\mathbf{K_i}|\le \iota$, thus enabling a realistic allocation policy in DNS-based load balancing. 
We refer to a sub-flow $\langle i, k\rangle$ as a mapping unit, for $i\in\mathbf{F}$ and $k\in\mathbf{K_i}$. $k$ can be regarded as the identifier of each sub-flow, as depicted in \Cref{overview}. There are other methods to divide flows into the granularity suitable for DNS server allocation, which is beyond our discussion as our main purpose in this paper is to present an optimization model with considerations of practical limitations.

Let $x_{ij}^k (i\in\mathbf{F},j\in\mathbf{M},k\in\mathbf{K_i})$ be the 0-1 variable that indicates if mapping unit $\langle i, k\rangle$ should be allocated to CBS $j$ when $x_{ij}^k=1$, or not otherwise. Then $\mathbf{X}=[x_{ij}^k]$ is our traffic allocation policy.

Notice that user requests that are not assigned a CBS can connect to any neighboring BS --not necessarily a CBS-- to fetch the content from the Internet.

\begin{figure}[tb]
\centering
\includegraphics[width=0.45\textwidth]{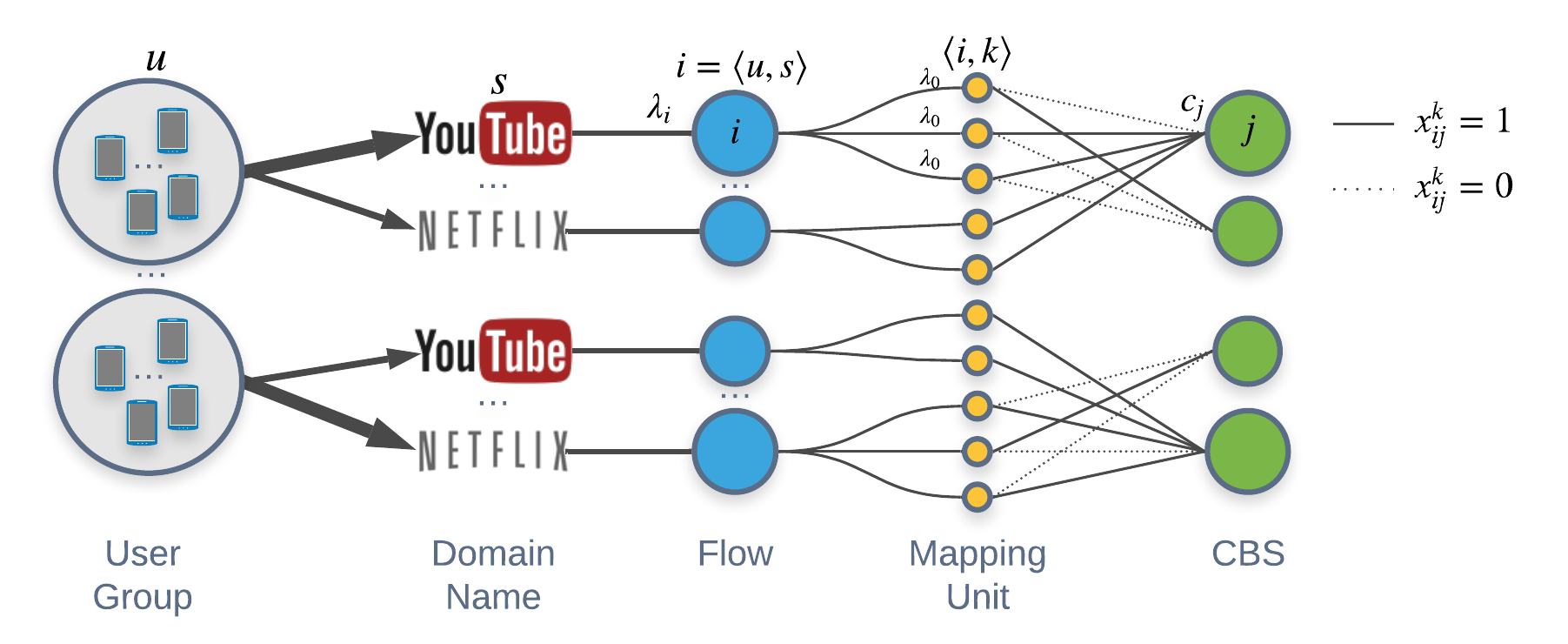}
\caption{\textbf{UTA problem:} user traffic are partitioned into flows based on \emph{(user group, domain name)}; flows are divided into mapping units to be allocated to CBSs. Since a user can be associated with multiple CBSs, each mapping unit can have multiple alternative CBSs to be matched (solid and dotted lines), and can be allocated to some of them (solid lines).
\label{overview}}
\end{figure}

\subsection{Optimization Goal}
\label{optgoal}

\subsubsection*{QoS Gain}
In our problem, we suppose that QoS depends mainly on the content fetching delay. When a user receives the content from caches in CBSs rather than the Internet, there is QoS gain because content fetching delay is saved. We denote the QoS gain of allocating a user request of mapping unit $\langle i,k\rangle$ to CBS $j$ by $g_{ij}$, and we assume $g_{ij}\ge 0$. Because of the homogeneity of sub-flows, all the sub-flows within the same flow have the same estimated QoS gain.

The estimation of $g_{ij}$ involves multiple concerns, including distance between a user and CBS, backhaul delay, etc.. When the cache in a CBS has a high capacity but unfortunately small storage, user requests routed to it may suffer frequent cache miss, which is likely to damage the QoS gain as well. In our problem, we regard $g_{ij}$ as a constant factor and do not focus on the estimation of it.

Now the total QoS gain can be quantified as:
\begin{equation}
\label{qx}
G(\mathbf{X})=\sum_{j\in\mathbf{M}}\sum_{i\in\mathbf{F}}g_{ij}\sum_{k\in\mathbf{K_i}}\lambda_0x^k_{ij}
\end{equation}

\subsubsection*{Load Balancing}
To balance the load, we first ensure each CBS receives user traffic no more than its capacity:
\begin{equation}
L_j(\mathbf{X})=\sum_{i\in\mathbf{F}}\sum_{k\in\mathbf{K_i}}\lambda_0x^k_{ij}\leq c_j
\end{equation}
, where $L_j(\mathbf{X})$ quantifies user traffic that CBS $j$ receives, which also represents the load of CBS $j$.

Second, considering that balancing spare capacity between CBSs can not only help to offload popular CBSs and increase the utilization of light-loaded CBSs, but also reduce the risk of overload when "flash crowd" occurs, we maximize the spare capacity fairness between CBSs by maximizing $$
B(\mathbf{X})=\sum_{j\in\mathbf{M}}H(c_j-L_j(\mathbf{X}))
$$. $H(\cdot)$ can be any decreasing concave function on $[0,\max(c_j)]$, which encourages a mapping unit to be allocated to a light-loaded CBS for higher gain. For instance, we use $H(v) = -v^2$, and thus we have:
\begin{equation}
B(\mathbf{X})=-\sum_{j\in\mathbf{M}}(c_j-L_j(\mathbf{X}))^2
\end{equation}

\subsubsection*{Allocation Consistency}
A CBS can suffer severe cache miss and consume more backhaul bandwidth if the domain names it serves change a lot after the recalculation of allocation policy. Therefore, our traffic allocation policy needs to be less "sensitive" and more consistent.

We denote whether CBS $j$ serves mapping unit(s) under the domain name of flow $i$ by $w_{ij}$, which indicates the favorability of mapping $\langle i,k\rangle$ to $j$. We assume $w_{ij}\ge 0$ as well. Note that the identifier $k$ of sub-flow makes no difference to $w_{ij}$. Specifically, the estimation of $w_{ij}$ is based on the previous traffic allocation policy $\mathbf{\hat{X}} = [\hat{x}^k_{ij}]$. Suppose $\mathbf{F_s}$ consists of mapping units under the same domain name $s$, and then $w_{ij}$ can be defined as:
$$
w_{ij} = \left\{\begin{matrix}
1 &,\text{ if }i=\langle *,s\rangle\text{ and }\exists i'\in \mathbf{F_s}, \hat{x}^k_{i'j}=1 \\ 
0 &,\text{ otherwise}
\end{matrix}\right.
$$
We can then maximize $W(\mathbf{X})$ to maximize the consistency of traffic allocation policy $\mathbf{X}$:
\begin{equation}
W(\mathbf{X})=\sum_{j\in\mathbf{M}}\sum_{i\in\mathbf{F}}\sum_{k\in\mathbf{K_i}}w_{ij}x^k_{ij}
\end{equation}
Similar to mapping units, there may also be other methods to quantify $w_{ij}$, which is beyond the scope of this paper.

\subsection{User-traffic Allocation Problem}
All in all, we present user-traffic allocation (UTA) problem in a DNS-based mobile edge caching that aims at maximizing QoS gain, allocation consistency as well as maintaining load balancing. We can formulate UTA problem as: 

\begin{subequations}
\label{p1}
\begin{alignat}{2}
\max\quad & F(\mathbf{X})=\mu_1 G(\mathbf{X})+\mu_2 B(\mathbf{X})+\mu_3 W(\mathbf{X}) \label{LM:1}\\
\mbox{s.t.}\quad &L_j(\mathbf{X})\leq c_j, \forall j\in\mathbf{M}\label{LM:2}\\
&\sum_{j\in\mathbf{M}}x^k_{ij}\le 1, \forall i\in\mathbf{F},k\in\mathbf{K_i}\label{LM:3}\\
&x^k_{ij}\le t_{ij}, \forall i\in\mathbf{F},k\in\mathbf{K_i}\label{LM:5}\\
&x_{ij}^k\in\{0,1\}, \forall i\in\mathbf{F},j\in\mathbf{M},k\in\mathbf{K_i}\label{LM:4}
\end{alignat}
\end{subequations}

\eqref{LM:1} is the objective function, where $\mu_1$, $\mu _2$ and $\mu_3$ are constant variables to balance the trade-off among the three goals in this optimization model, \emph{QoS gain}, \emph{load balancing} and \emph{allocation consistency}.
Since we do not assume multi-source downloading, \eqref{LM:3} indicates that each mapping unit should be allocated to at most one CBS.
Finally, \eqref{LM:2}, \eqref{LM:5}, \eqref{LM:4} are capacity, connectivity and integrality constraints, respectively.

\section{Algorithm with Optimality Guarantee}
\label{theoretical}

The formulation of UTA problem \eqref{p1} is a problem of NP-hard quadratic integer programming, which calls for a computationally efficient approximate algorithm.
In this section, we prove that our model \eqref{p1} is equivalent to maximizing a monotone submodular function which is subject to matroid constraints. We then propose a simple and elegant greedy algorithm with the considerable approximation ratio of 3/4 to solve this problem.

\textbf{Properties of~\eqref{p1} (abstract). }The integrality constraint~\eqref{LM:4} enables that every cache decision $\mathbf{X} = [x^k_{ij}]$ can be written as a set $A\subset\{f^k_{ij}|i\in\mathbf{N},j\in\mathbf{M},k\in\mathbf{K_i}\}$,where $x^k_{ij}=1\Leftrightarrow f^k_{ij}\in A$. Thus, the constraints of~\eqref{p1} can be written as matroid constraints, according to the definition of partition matroids \cite{partionmatroid}. Moreover, the objective function~\eqref{LM:1} can be written as a set function  \cite{partionmatroid} which is a monotone submodular function.  Thus, \eqref{p1} is equivalent to maximizing a monotone submodular function which is subject to matroid constraints. Due to space constraints, detailed proof is in \Cref{detailed}.

Fisher et al.\cite{greedy} presents a simple and common \emph{greedy algorithm} to approximately solve the optimization problem that maximizing monotone submodular function subject to matroid constraints, with specific optimality guarantees. Before introducing the algorithm, we define the marginal value of allocating mapping unit $\langle i, k \rangle$ to CBS $j$ as
$$m_{ij}^k(X)=F(\mathbf{X}|x_{ij}^k=1)-F(\mathbf{X}|x_{ij}^k=0)$$
, where $x_{ij}^k=1$ refers to the new matrix generated by
changing $x_{ij}^k$ of $\mathbf{X}$ from 0 to 1. Then the offline algorithm is described in \Cref{greedy}, which keeps on greedily choosing a tuple $(i,j,k)$ with highest marginal value under constraints \eqref{LM:2}-\eqref{LM:5}, and then allocating $\langle i, k \rangle$ to $j$, i.e.\ let $x_{ij}^k=1$. \Cref{optimality} has proven the solution obtained by  \Cref{greedy} yields a 3/4 approximation.

\begin{algorithm}[htb]         
    \caption{The greedy algorithm} 
    \label{greedy} 
    \begin{algorithmic}[1]                
    \STATE Initializing: $\mathbf{X}\leftarrow \{\mathbf{0}\}$;
    \STATE $\mathbf{C}\leftarrow \{(i,j,k)\ |\ x_{ij}^{k}=0$\\
    \ \ \ \ \ \ \ \ \ \ and $\sum_{i'\in\mathbf{F}}\sum_{k'\in\mathbf{K_{i'}}}\lambda_0x^{k'}_{i'j}\leq c_j-\lambda_0$\\
    \ \ \ \ \ \ \ \ \ \ and $\sum_{j'\in\mathbf{M}}x^k_{ij'}\le 0$\\
    \ \ \ \ \ \ \ \ \ \ and $x^{k}_{ij}\le t_{ij}\}$
    \WHILE{$\mathbf{C}\neq\emptyset$}\label{check}
    \STATE $(i_0, j_0, k_0)\leftarrow argmax_{(i, j, k)\in \mathbf{C}}m_{ij}^{k}(\mathbf{X})$;
    \STATE $\mathbf{X}\leftarrow\mathbf{X}|x_{i_0j_0}^{k_0}=1$;     
    \ENDWHILE
    \RETURN $\mathbf{X}$;                
\end{algorithmic}
\end{algorithm}




\section{Evaluation}
\label{eva}
In this section, we carry out both numerical and trace-driven evaluations to explore the influence of different factors. We compare our approximation algorithm with other baselines in terms of QoS gain, load balancing and cache miss ratio.

\subsection{Experimental Setup}
\label{setup}
\subsubsection*{Wireless Environment}

We consider a $500\times500 m^2$ wireless network area. The positions of CBSs follow the Poisson Point Process (PPP) with a density of $80\frac{CBS}{{km}^2}$.
To simulate the situation of heterogeneous CBSs, we assume that the radius of CBS's communication range is randomly chosen from $150m$ to $300m$.
We fix the total capacity \verb|CAPACITY|. As CBS that covers larger communication range is likely to associate with more users and handle more user traffic, we let capacity of each CBS be proportional to its coverage radius. When CBS receives more user traffic than its capacity, we simply assume that it just rejects to serve the exceeded part of requests. However, in practice, overloading may lead to more severe consequences such as damage to the function of the BSs. User groups are distributed with a density of 40 UGs per ${km}^2$.

All the content files are equally sized as 30MB. We set the cache size of each CBS to be (i) 60GB for numerical evaluations (ii) 3GB for trace-driven evaluations. Caches all use Least Recently Used (LRU) policy to do content replacement.
We use $g_{ij}=100e^{-d_{ij}/500}$ to estimate the QoS gain, where $d_{ij}$ refers to the distance between the location of flow $i$'s user group $u$ and CBS $j$.

\begin{figure}[tbp]
\centering
\subfigure[Average CHRD of algorithms (\%)
\label{chrd}]{
\centering
\includegraphics[width=0.3\textwidth]{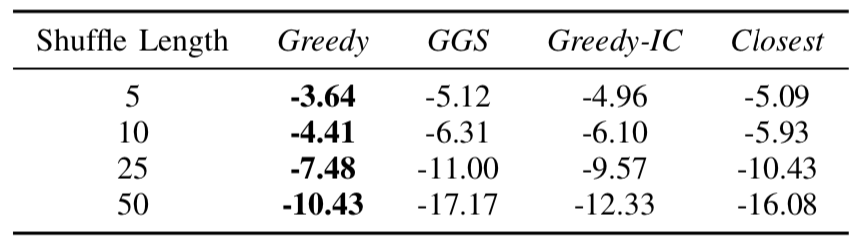}
}
\subfigure[Cache Hit Ratio
\label{rchr}]{
\centering
\includegraphics[width=0.15\textwidth]{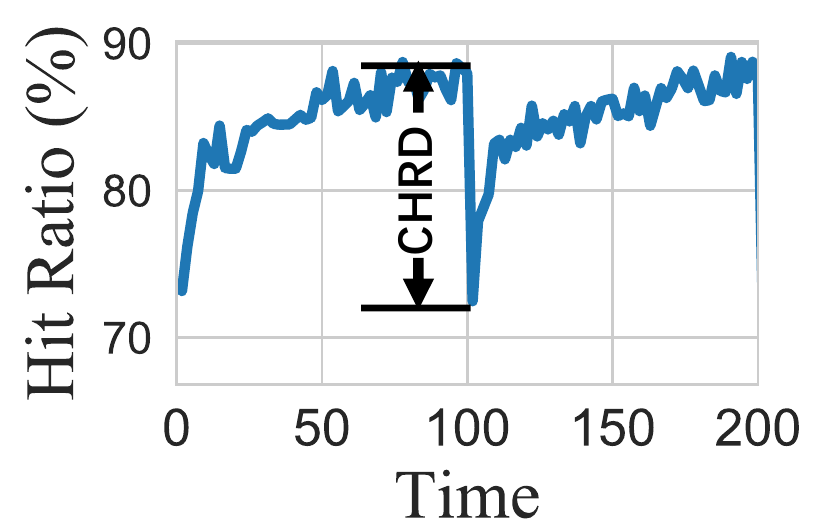}
}
\caption{\emph{Scenario 2: }the real-time average cache hit ratio of all the caches suffers an immediate, drastic decrease after the regeneration of traffic allocation policy due to "inconsistent" allocation, as many contents need to be replaced. \emph{Greedy} have lower CHRD because of its consistency.
\label{dyn2chrd}}
\end{figure}

\begin{figure*}[tb]
\centering
\subfigure[QoS gain
\label{dyn1q}]{
\centering
\includegraphics[width=0.3\textwidth]{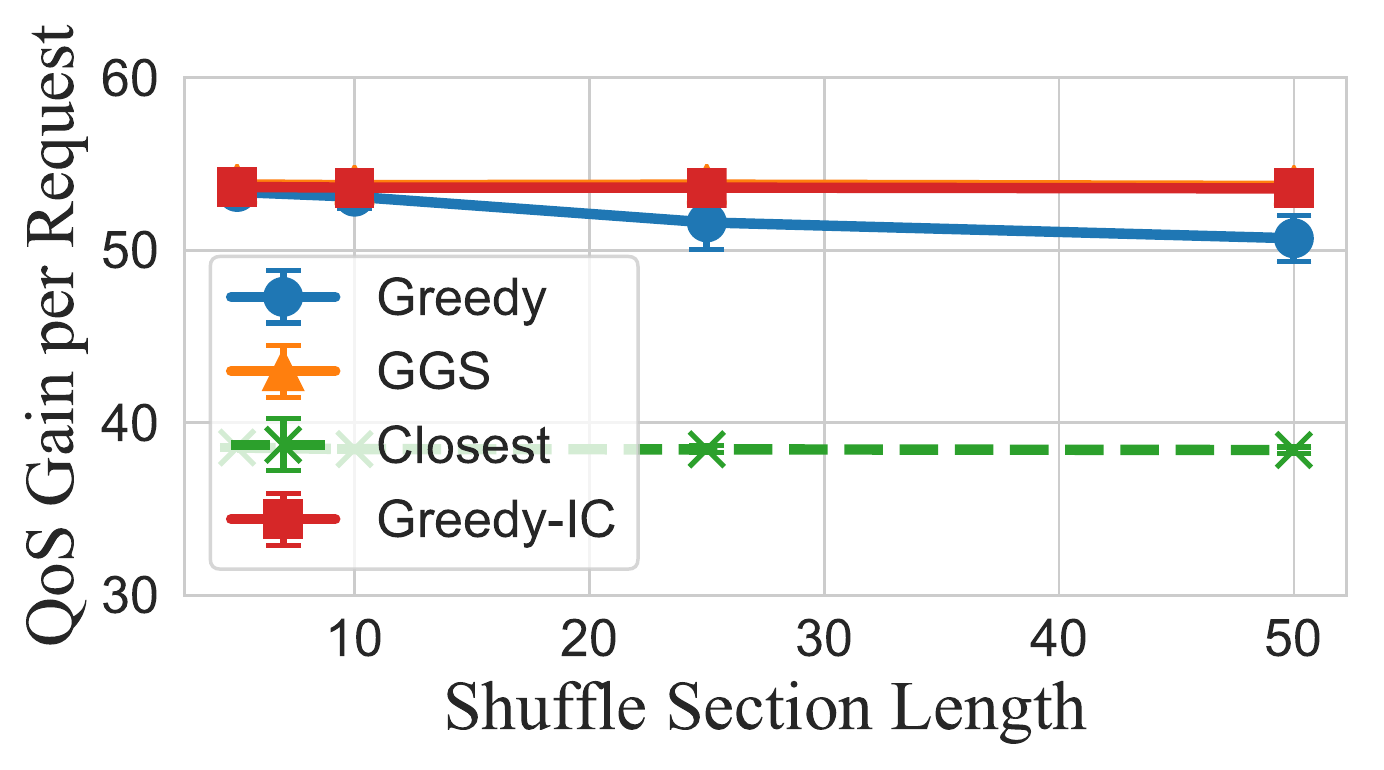}
}
\subfigure[Consistency
\label{dyn1diff}]{
\centering
\includegraphics[width=0.3\textwidth]{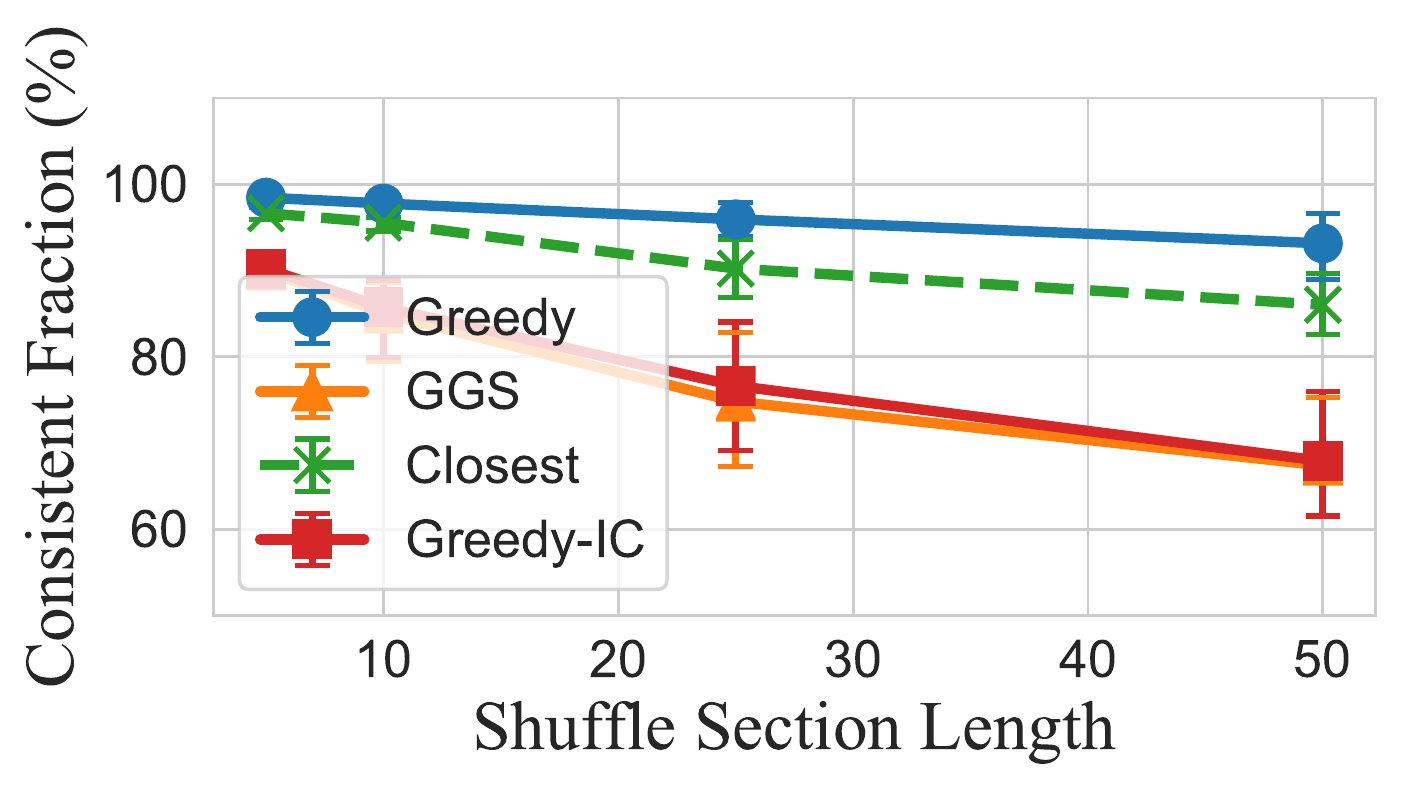}
}
\subfigure[Cache Miss Ratio
\label{dyn1miss}]{
\centering
\includegraphics[width=0.3\textwidth]{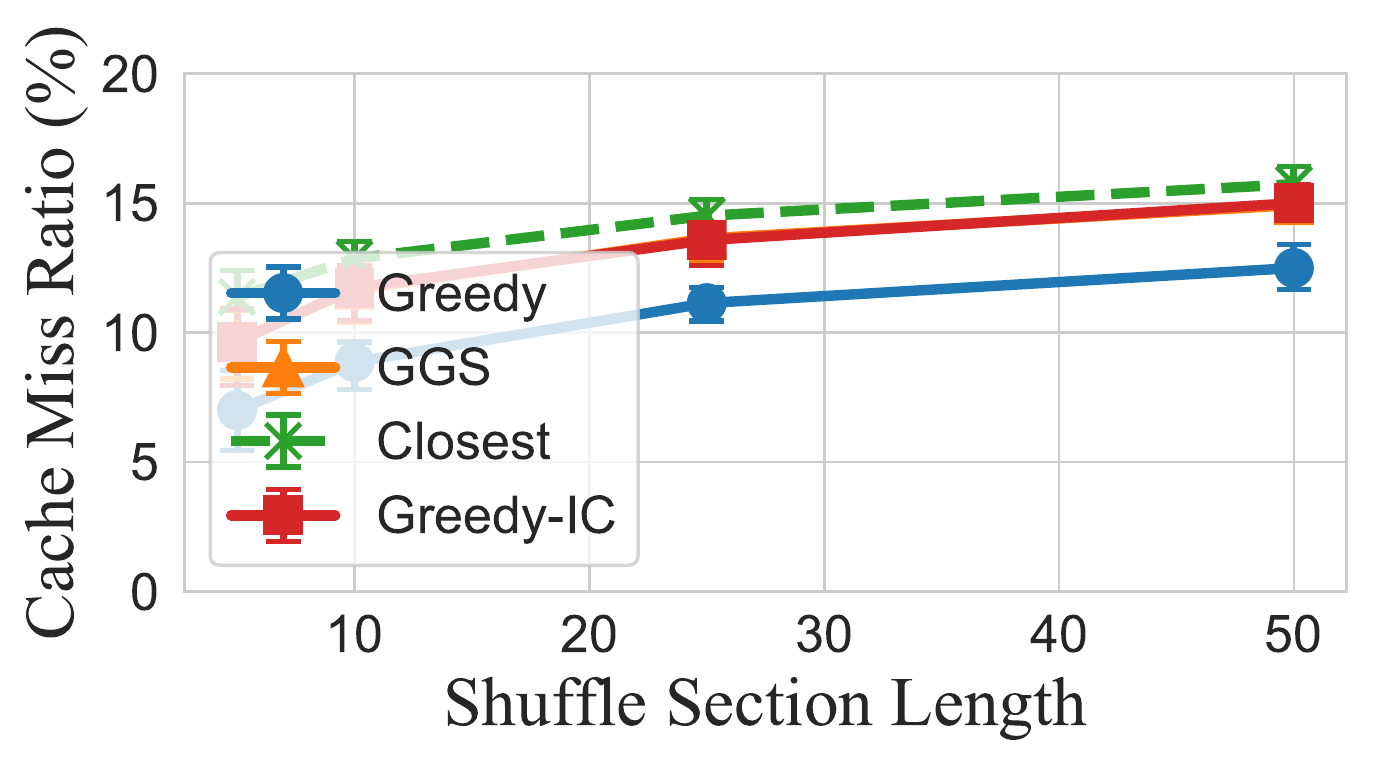}
}
\caption{\emph{Scenario 1: }we divide domain names into sections and randomly shuffle the popularity of domain names within each section. Content Popularity changes more when section length grows. \emph{Greedy} performs more consistent and yields less cache miss ratio without losing much QoS gain.
\label{dynamic1}}
\end{figure*}
\begin{figure*}[tb]
\centering
\subfigure[QoS gain
\label{traceq}]{
\centering
\includegraphics[width=0.3\textwidth]{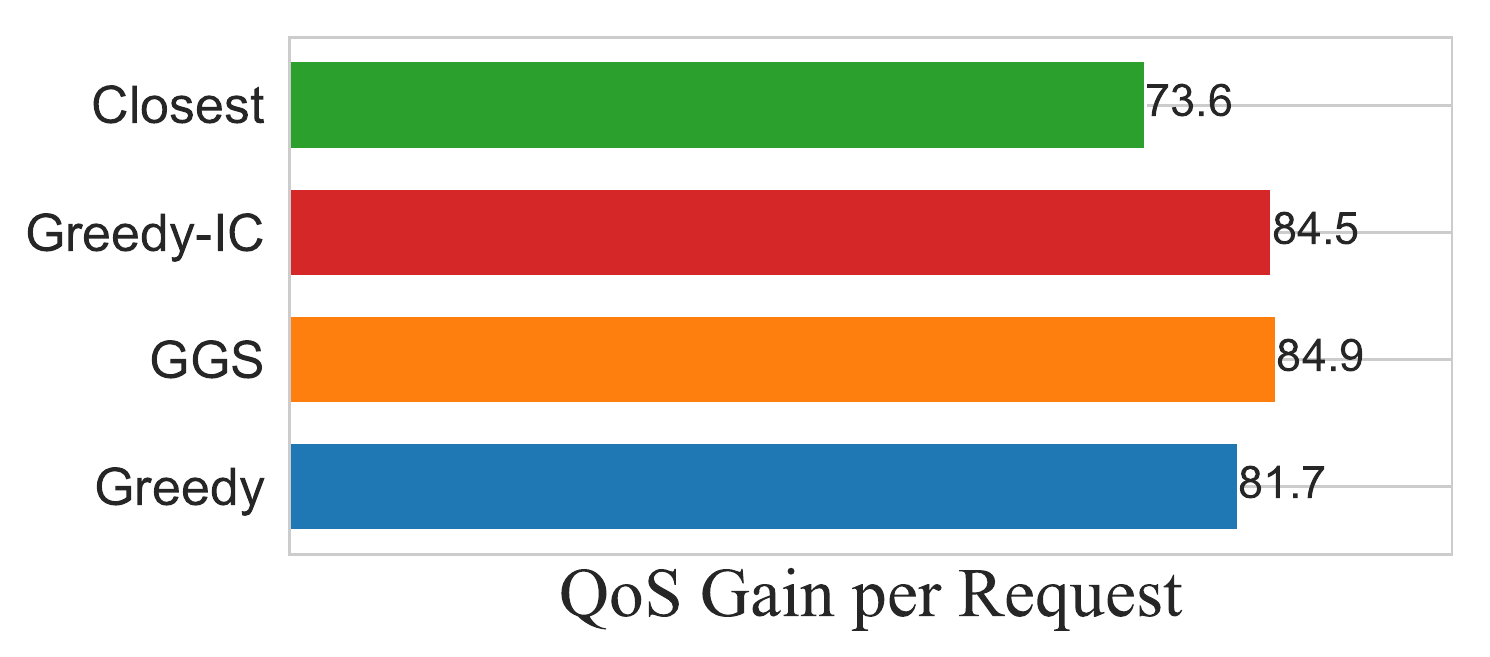}
}
\subfigure[Consistency
\label{tracediff}]{
\centering
\includegraphics[width=0.3\textwidth]{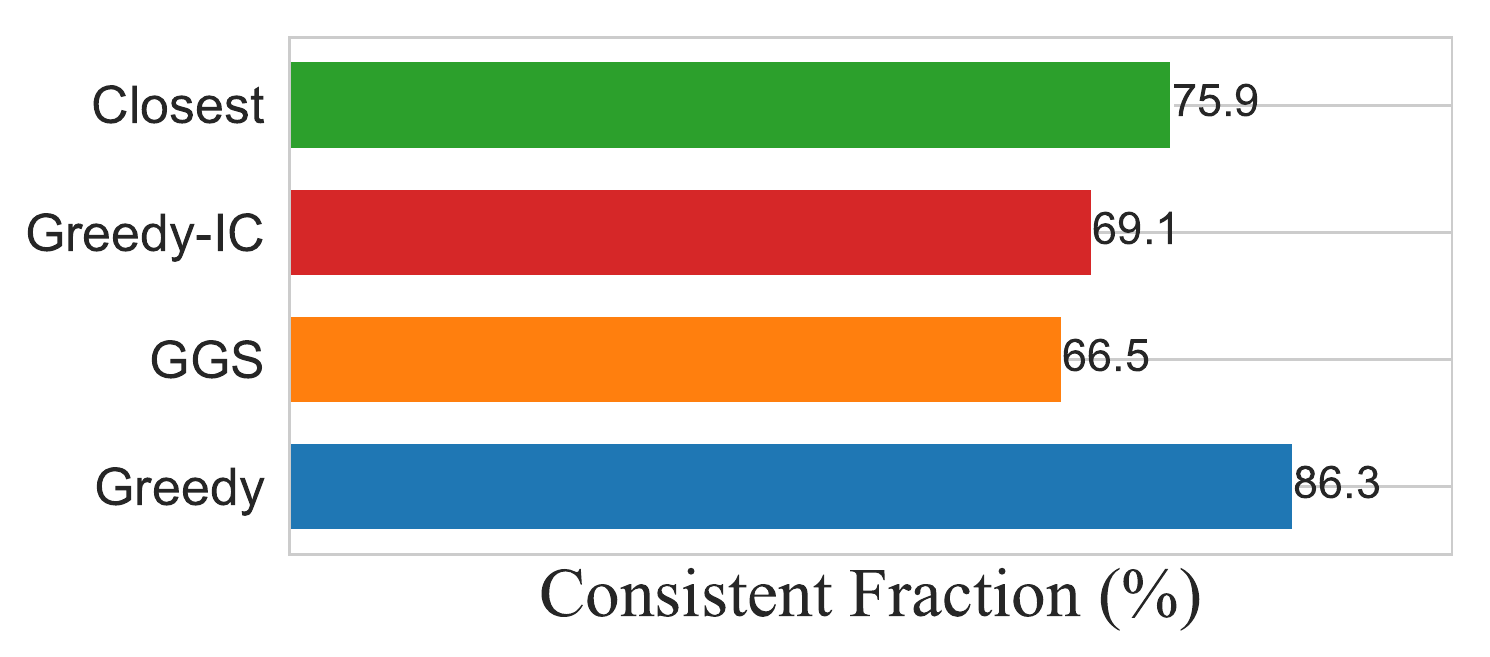}
}
\subfigure[Cache Miss Ratio
\label{tracemiss}]{
\centering
\includegraphics[width=0.3\textwidth]{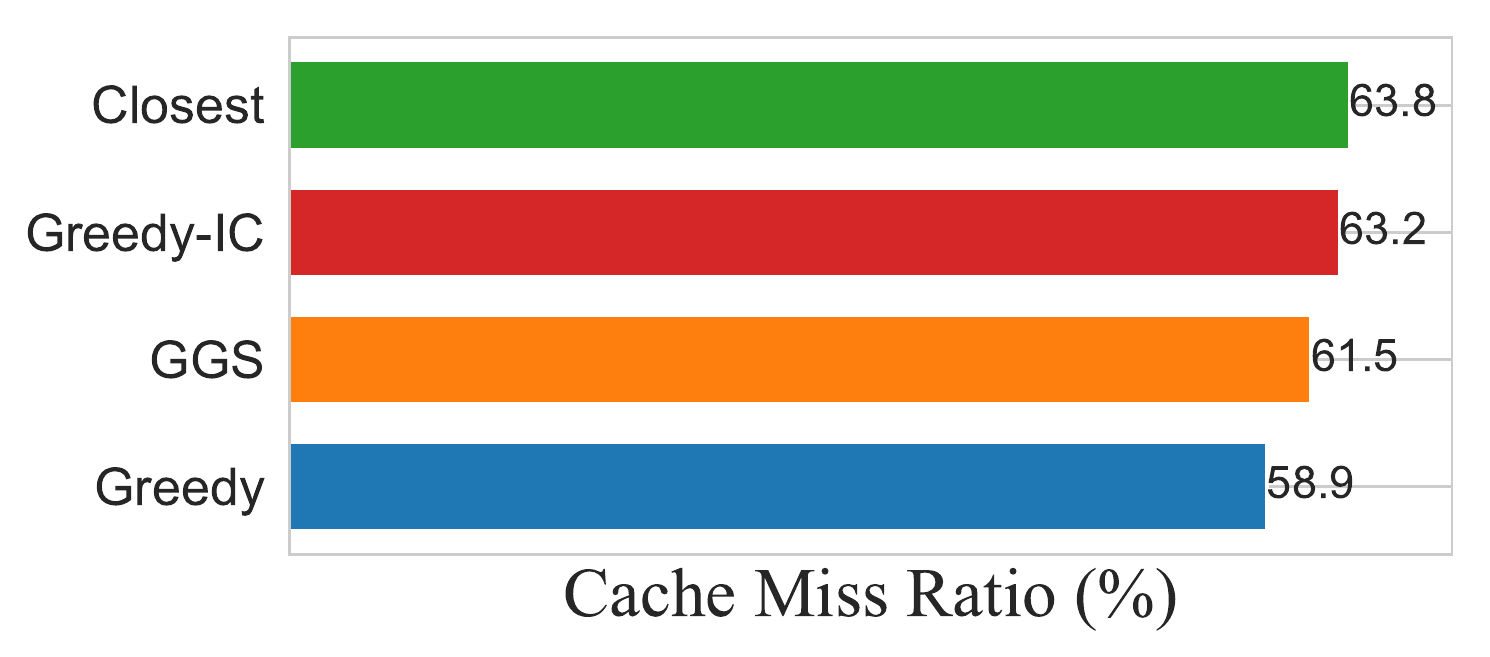}
}
\caption{\emph{Scenario 2: }we carry out a trace-driven evaluation using a short video request trace collected from real world, which reflects the dynamics of user requests. \emph{Greedy} performs more consistent without losing much QoS.
\label{trace}}
\end{figure*}

\subsubsection*{Baselines and Performance Metrics}
We name \Cref{greedy} as \emph{Greedy}, and compare it with three baselines: 1) \emph{GGS}, the generalized Gale-Shapley algorithm presented by Akamai\cite{akamai2012} that solves UTA problem algorithmically, but without considering consistency.  2) \emph{Greedy-IC}, the inconsistent version of \eqref{p1} whose objective function without $W(\mathbf{X})$ and solved by \Cref{greedy}. 3) \emph{Closest}, the algorithm that considers neither load balancing nor consistency. We assume the maximal length of the IP list to be 16\footnote{The maximal length of a DNS packet is 512B \cite{512}, and the size of IP address entry for IPv4/IPv6 is 16B/28B. The DNS response in one packet can consist of about 25 IPv4/14 IPv6 address at most. }.
The performance is evaluated by the following metrics:

\textbf{QoS gain per request: } we calculate average QoS gain per user request, which indicates how near user traffic can be served by a CBS.

\textbf{Consistent fraction: }each time the traffic allocation policy is recalculated, we record the fraction of "consistent" allocation, i.e.
 the fraction of user traffic that is allocated to a CBS that serves traffic under the same domain name in the previous traffic allocation policy.
 
\textbf{Cache miss ratio: }the average cache miss ratio of the mobile network during each time interval, which reflects the damage of "inconsistent" allocation.

\textbf{Cache hit ratio drop (CHRD): }as is shown in \Cref{rchr}, the real-time average cache hit ratio of all the caches suffers an immediate, drastic decrease after the regeneration of traffic allocation policy due to "inconsistent" allocation, as many contents need to be replaced. The ratio then increases and becomes stable. We record this instant drop of the cache hit ratio, which shows the harm of inconsistent traffic allocation.

\subsection{Scenario 1: Randomly-shuffled Workload}
In this scenario, we simulate the dynamics of content popularity.
Let there be 50 domain names with each domain name having 10,000 contents. The popularity of domain names and contents under each domain follow Zipf distribution. Zipf exponent of domain names and contents of each domain name are $\alpha_S=0.8$ and $\alpha_C=1.5$, because we think popularity intra-site skewness of contents is more prominent than the inter-site one. Considering that few regions with high population density while most of the regions have small population \cite{population}, we use a very flat Zipf distribution to approximate the traffic volume from UGs, with $\alpha_U=0.5$. 
We let \verb|CAPACITY|$=1200$, and set a total user traffic volume \verb|RATE|$=0.9\times$\verb|CAPACITY| (requests/s). Specifically, the traffic volume in a UG $u$ for domain name $s$ is expected to be \verb|RATE|$\times p_U(u)\times p_S(s)$, where $p_U$ and $p_S$ are the Zipf probability distribution functions of UG and domain name, respectively.
Then we sort the domain names by popularity and partition them into sections with equal length of \{5, 10, 25, 50\} domain names. The popularity of the domain names is randomly shuffled within each section, which may lead to traffic increase for some flows but decrease for others. The longer a section is, the higher popularity variance domain names in it have, and thus the more drastic popularity would change. The simulation lasts for 5,000 seconds and the time interval is 100 seconds, which means we shuffle the popularity and regenerate allocation policy every 100 seconds. Therefore, we have 50 results in each run. We record the real-time cache hit ratio every 2 seconds, in order to detect the immediate cache hit ratio drop. Before logging the results, we ran for another 500 seconds to warm up.

\textbf{Results. }Results are in \Cref{dynamic1}. \emph{Closest} yields least GoS gain because most of its requests are rejected by the CBS they are allocated to, which shows that if we don't consider load balancing, even when the total traffic volume remains the same, CBSs are likely to be overburdened. QoS gain of other algorithms is similar. In terms of consistency, when the content popularity sustains different degrees of oscillations, total QoS gain changes little (no more than $8\%$), but the traffic allocation policy is likely to change a lot. \emph{GSS} and \emph{Greedy-IC} can bring about up to nearly 40\% "inconsistent" allocation, while \emph{Greedy} no more than 10\%. It shows that \emph{GSS} and \emph{Greedy-IC} are more sensitive to the popularity changes than \emph{Greedy} and \emph{Closest}, which makes \emph{Greedy} yield least cache miss ratio. Although \emph{Closest} performs also well in terms of consistency, its total QoS gain is unsatisfactory. We present the average CHRD in \Cref{chrd} as well. It shows that the consistency of \emph{Greedy} results in its better performance in cache hit ratio oscillation, because it causes less content replacement, which validates the effectiveness of our formulation \eqref{p1}. 

\subsection{Scenario 2: Trace-driven Evaluation}
In this scenario, we use the dataset from \cite{trace} that records YouTube requests arising from the wired campus network. The trace we used lasts for 14 days in Feb. 2008, with 611,630 user requests from 6,670 anonymous users and 303,190 contents. We calculate the number of requests in each time interval as the total traffic volume \verb|RATE| for traffic allocation policy calculation in the algorithms. To avoid the influence of unpopular contents that do not even receive a second request, we screen out 20,000 most popular contents, and hash them into 20 domain names. The real request trace reflects both the user request patterns and the dynamics of traffic volume as well as content popularity. We divide all the users into 10 groups in order to partition the user requests into different flows according to the tuple\emph{(user group, domain name)}.
We let Time interval be 4 hours, and \verb|CAPACITY| be $80\%$ of the highest traffic volume among all the time intervals. We design the scenario this way as CBS is expected to serve a certain amount of user traffic during peak hours. 

\textbf{Results. }As shown in \Cref{trace}, we find that results in trace-driven evaluations show similar tendency with those in numerical evaluations. \emph{Greedy} increases the allocation consistency while trying to avoid much of QoS loss, which results in less cache miss ratio and thus saves more backhaul bandwidth. We think one of the reasons why the difference among algorithms is less clear than scenario 1 could be that the content popularity indicated in the trace is severely skewed. Specifically, among all the 303,190 contents there are only about 200 of them popular enough to be requested repeatedly in multiple time intervals. Therefore, even consistent traffic allocation policy may suffer from severe cache miss. The comparison with \emph{Closest} shows if load balancing is not considered when allocating user traffic, a large amount of traffic is likely to overburden popular CBSs. In such cases when QoS is severely damaged, pursuing consistency is useless.

\section{Conclusion}
\label{conclusion}
In this paper, we focus on the user-traffic allocation and load balancing in DNS-based mobile edge caching and coping with the two challenges posed by the practical implementation of caches in BSs. One is the granularity of DNS-based traffic scheduling, and the other is the need for consistent allocation policy under dynamic user-traffic.
We formulate the user-traffic allocation (UTA) problem in DNS-based mobile edge caching, which aims at maximizing QoS gain and allocation consistency as well as maintaining load balance. We then prove that the problem is equivalent to maximizing monotone submodular function that subjects to matroid constraints. A simple greedy algorithm is presented to solve this problem within 3/4 of the optimal solution. Extensive evaluations under both numerical and trace-driven situations show that the algorithm yields more consistent traffic allocation policy and thus results in less cache miss ratio and more balanced server load without losing much of QoS gain.



\bibliographystyle{IEEEtran}
\bibliography{cdn-hash}

\appendix

\subsection{Properties of~\eqref{p1}}
\label{detailed}


\subsubsection*{Ground Set}
Let element $f^k_{ij}$ represents the event that mapping unit $\langle i, k\rangle$ is assigned to CBS $j$, and thus the ground set in our UTA problem can be defined as:
\begin{equation}
\label{ground}
E = \{f^k_{ij}\mid j\in\mathbf{M}, i\in\mathbf{F}, k\in\mathbf{K_i}\}
\end{equation}

For a subset $A\subseteq E$, whether $f^k_{ij}$ is in $A$ depends on whether $x^k_{ij}=1$, thus a \emph{one-to-one correspondence} is achieved between a subset $A\subseteq E$ and the solution $\mathbf{X}$ to \eqref{p1}.

\subsubsection*{Constraints and Feasible Solutions}
\label{cons_sec}
Every element of $2^{E}$ (the power set of $E$) corresponds to a set of solutions to \eqref{p1}, and the set defined by the constraints in \eqref{p1} is no exception.

\begin{pro}
\label{cons}
Let $P_j=\{f^k_{ij}\mid i\in\mathbf{F}, k\in\mathbf{K_i}\}$, $Q_i^k=\{f^k_{ij}\mid j\in\mathbf{M}\}$ and $T^k_{ij}=\{f^k_{ij}\mid i\in\mathbf{F},j\in\mathbf{M},k\in\mathbf{K_i}\}$. The constraints in \eqref{p1} are equivalent to $\mathcal{I}$, where
\begin{subequations}
\label{matroid_def}
\begin{flalign}
\mathcal{I}_b=\{A\subseteq E\mid |A\cap P_j|\leq \lfloor \frac{c_j}{\lambda_0}\rfloor\}\label{capacons}\\
\mathcal{I}_c=\{A\subseteq E\mid |A\cap Q^k_i|\leq 1\}\\
\mathcal{I}_d=\{A\subseteq E\mid |A\cap T^k_{ij}|\leq t_{ij}\}\\
\mathcal{I}=\mathcal{I}_b \cap \mathcal{I}_c \cap \mathcal{I}_d
\end{flalign}
\end{subequations}
\end{pro}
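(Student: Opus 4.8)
The plan is to exploit the one-to-one correspondence $x^k_{ij}=1\Leftrightarrow f^k_{ij}\in A$ established just above and translate each of the four constraints of \eqref{p1} into a statement about $A$, showing term by term that the feasible region of \eqref{p1} is exactly $\mathcal{I}$. I would first observe that the integrality constraint \eqref{LM:4} is already absorbed by the correspondence itself: a subset $A\subseteq E$ records precisely which $x^k_{ij}$ equal $1$, so every $A$ automatically yields a $0$--$1$ matrix and vice versa, and nothing further needs checking for \eqref{LM:4}. It then remains to verify the three set-family conditions one at a time by rewriting the relevant sums as cardinalities of intersections.

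For the capacity constraint I would note that, by the definition of $P_j$, the quantity $\sum_{i\in\mathbf{F}}\sum_{k\in\mathbf{K_i}} x^k_{ij}$ counts exactly the elements $f^k_{ij}\in A$ lying in $P_j$, i.e.\ $L_j(\mathbf{X})=\lambda_0\,|A\cap P_j|$, so \eqref{LM:2} reads $\lambda_0\,|A\cap P_j|\le c_j$. Similarly, $\sum_{j\in\mathbf{M}} x^k_{ij}=|A\cap Q^k_i|$, so \eqref{LM:3} becomes $|A\cap Q^k_i|\le 1$, matching $\mathcal{I}_c$ directly; and reading $T^k_{ij}$ as the singleton $\{f^k_{ij}\}$ gives $x^k_{ij}=|A\cap T^k_{ij}|$, so the connectivity constraint \eqref{LM:5} becomes $|A\cap T^k_{ij}|\le t_{ij}$, matching $\mathcal{I}_d$.

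The one step that needs care -- and the only place the translation is not purely cosmetic -- is converting the real-valued capacity bound into the integer bound appearing in $\mathcal{I}_b$. Since $|A\cap P_j|$ is a nonnegative integer, I would invoke the elementary fact that for an integer $n$ and a real $r\ge 0$ one has $n\le r\iff n\le\lfloor r\rfloor$; applying it with $n=|A\cap P_j|$ and $r=c_j/\lambda_0$ shows $\lambda_0\,|A\cap P_j|\le c_j\iff |A\cap P_j|\le\lfloor c_j/\lambda_0\rfloor$, which is exactly the defining condition of $\mathcal{I}_b$. Finally, because a matrix $\mathbf{X}$ is feasible for \eqref{p1} iff it satisfies \eqref{LM:2}, \eqref{LM:3} and \eqref{LM:5} simultaneously (integrality being free), the corresponding $A$ lies in $\mathcal{I}_b$, $\mathcal{I}_c$ and $\mathcal{I}_d$ simultaneously, i.e.\ in $\mathcal{I}=\mathcal{I}_b\cap\mathcal{I}_c\cap\mathcal{I}_d$, which closes the equivalence. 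I expect no genuine obstacle beyond keeping the index conventions for $P_j$, $Q^k_i$ and $T^k_{ij}$ straight and handling the floor correctly; the argument as a whole is a careful \emph{bookkeeping} of sums as intersection cardinalities.
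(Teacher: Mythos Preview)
Your proposal is correct and follows essentially the same approach as the paper: both arguments use the bijection $x^k_{ij}=1\Leftrightarrow f^k_{ij}\in A$ to rewrite each constraint sum as an intersection cardinality, note that integrality is automatic, and match \eqref{LM:2}, \eqref{LM:3}, \eqref{LM:5} to $\mathcal{I}_b,\mathcal{I}_c,\mathcal{I}_d$ respectively. Your version is in fact a bit more careful than the paper's, since you explicitly justify the passage to $\lfloor c_j/\lambda_0\rfloor$ via the integer--real inequality fact and you spell out the connectivity case, both of which the paper leaves implicit.
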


\begin{proof}[Proof]
Suppose $A$ and $\mathbf{X}$ are equivalent, which means for every $j\in\mathbf{M}$, $i\in\mathbf{F}$, and $i\in\mathbf{K_i}$:
$f^k_{ij}\in A \Leftrightarrow x^k_{ij}=1$.

The sum of some $x^k_{ij}$ indicates that how many variables equal to 1 in all of them, which is exactly the cardinality of the intersection of $A$ and another set. For example, $$\sum_{j\in\mathbf{M}}x^k_{ij}=|A\cap Q^k_i|$$. Therefore, we have:
$$L_j(\mathbf{X})\leq c_j \Leftrightarrow \lambda_0\sum_{i\in\mathbf{F}}\sum_{k\in\mathbf{K_i}}x^k_{ij}\leq c_j
\Leftrightarrow \sum_{i\in\mathbf{F}}\sum_{k\in\mathbf{K_i}}x^k_{ij}\leq \lfloor \frac{c_j}{\lambda_0}\rfloor$$
 and $$\sum_{j\in\mathbf{M}}x^k_{ij}\leq 1 \Leftrightarrow |A\cap Q^k_i|\leq 1
$$.
Due to the discreteness of set, all the solutions to \eqref{p1} denoted by a set $A$ inherently satisfies \eqref{LM:4}. Thereinto, all the solutions satisfy \eqref{LM:2} are in $\mathcal{I}_b$, while that satisfies \eqref{LM:3} are in $\mathcal{I}_c$ and that satisfies \eqref{LM:5} are in $\mathcal{I}_d$, for which $\mathcal{I}=\mathcal{I}_b \cap \mathcal{I}_c\cap \mathcal{I}_d$ represents all the feasible solutions to \eqref{p1}, i.e.\ the constraints of \eqref{p1}.
\end{proof}

The tuple $(E, \mathcal{I})$ contains the ground set $E$ and the constraints $\mathcal{I}\in 2^E$. \Cref{matroid} shows that the tuple is a matroid\cite{matroid1} and therefore \eqref{p1} has matroid constraints.

\begin{pro}
\label{matroid}
$\mathcal{M}=(E, \mathcal{I})$ is a matroid with the definition of $E$ and $\mathcal{I}$ in \eqref{ground} and \eqref{matroid_def} respectively.
\end{pro}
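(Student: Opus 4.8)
The plan is to verify the three independence axioms for $\mathcal{M}=(E,\mathcal{I})$ directly: that $\emptyset\in\mathcal{I}$, that $\mathcal{I}$ is hereditary (closed under subsets), and that $\mathcal{I}$ satisfies the augmentation axiom, namely that for any $A,B\in\mathcal{I}$ with $|A|<|B|$ there exists $e\in B\setminus A$ with $A\cup\{e\}\in\mathcal{I}$. First I would dispose of the two easy axioms, which should fall out of the common shape of the three defining families. Each of $\mathcal{I}_b,\mathcal{I}_c,\mathcal{I}_d$ has the form $\{A\subseteq E\mid |A\cap S|\le b_S\text{ for every block }S\}$ for a partition of $E$ and integer caps $b_S$; hence $\emptyset$ meets every cap, and since $|A'\cap S|\le|A\cap S|$ whenever $A'\subseteq A$, each family is hereditary. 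A finite intersection of hereditary families that all contain $\emptyset$ is again hereditary and contains $\emptyset$, so the nonemptiness and subset axioms hold for $\mathcal{I}$ without difficulty.

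The remaining content is the exchange axiom, and here I would first record that each individual family is a genuine \emph{partition matroid}: the blocks $\{P_j\}_{j\in\mathbf{M}}$ partition $E$ by serving CBS, the blocks $\{Q_i^k\}$ partition $E$ by mapping unit, and $\mathcal{I}_d$ is the restriction that simply deletes every ground element with $t_{ij}=0$. Each of $(E,\mathcal{I}_b)$, $(E,\mathcal{I}_c)$, $(E,\mathcal{I}_d)$ is therefore a matroid by the standard partition-matroid construction cited in the excerpt, so the individual exchange properties are free.

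The hard part will be passing from ``each $\mathcal{I}_\bullet$ is a matroid'' to ``the intersection $\mathcal{I}=\mathcal{I}_b\cap\mathcal{I}_c\cap\mathcal{I}_d$ is a matroid,'' and this is the step I expect to carry the real weight: the claim does \emph{not} follow formally, because an intersection of matroids is in general only a matroid-intersection system and need not satisfy the exchange axiom. The difficulty is localized precisely in how the two nontrivial partitions interact, since $\{P_j\}$ groups by CBS while $\{Q_i^k\}$ groups by mapping unit, and these are \emph{crossing} rather than nested: any $P_j$ and any $Q_i^k$ meet in the single element $f^k_{ij}$ (when $t_{ij}=1$) with neither containing the other. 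So I would attempt the exchange directly --- given $A,B\in\mathcal{I}$ with $|A|<|B|$, search for $e\in B\setminus A$ respecting both the per-CBS cap of $\mathcal{I}_b$ and the per-unit cap of $\mathcal{I}_c$ --- while recognizing that this is exactly the augmentation question for a bipartite degree-constrained (b-matching) structure, where such a uniform augmenting element need not exist. A small instance with two unit-capacity CBSs and two mapping units each associable to both already stresses it: taking $A=\{$unit $1\to$ CBS $1\}$ and $B=\{$unit $1\to$ CBS $2$, unit $2\to$ CBS $1\}$, no element of $B\setminus A$ can be added to $A$. The honest reading of the obstacle, then, is that before the exchange step can close one must either pin down an extra structural hypothesis that forces the caps into a laminar/nested relationship (under which partition constraints genuinely compose into a single matroid), or otherwise treat $\mathcal{I}$ as the intersection of the three matroids $\mathcal{I}_b,\mathcal{I}_c,\mathcal{I}_d$ rather than as one matroid.
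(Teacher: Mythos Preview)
Your analysis is more careful than the paper's own argument. The paper's proof simply observes that $(E,\mathcal{I}_b)$, $(E,\mathcal{I}_c)$, $(E,\mathcal{I}_d)$ are partition matroids and then asserts that their intersection $(E,\mathcal{I})$ ``is a matroid as well,'' citing a reference. That step is not valid in general: an intersection of matroids need not be a matroid, and your two-CBS/two-unit instance shows the exchange axiom genuinely fails here. With unit capacities and full connectivity, the sets $A=\{\text{unit }1\to\text{CBS }1\}$ and $B=\{\text{unit }1\to\text{CBS }2,\ \text{unit }2\to\text{CBS }1\}$ are both in $\mathcal{I}$ with $|A|<|B|$, yet adding either element of $B\setminus A$ to $A$ violates one of the two caps. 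So the proposition as literally stated cannot be proved, and you were right to stop at the exchange step rather than paper over it.

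The reassuring point, which you also identify in your closing sentence, is that nothing downstream actually requires $\mathcal{I}$ to be a single matroid. The Fisher--Nemhauser--Wolsey bound the paper invokes in the optimality section applies to monotone submodular maximization over an \emph{intersection of $P$ matroids}; since $\mathcal{I}=\mathcal{I}_b\cap\mathcal{I}_c\cap\mathcal{I}_d$ with each factor a genuine partition matroid, taking $P=3$ already delivers the claimed approximation guarantee. The correct formulation is therefore exactly your alternative: treat the constraint system as the intersection of three partition matroids rather than as one matroid, and drop the stronger claim.
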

\begin{proof}[Proof]
Review the definition of partition matroid: \emph{Partition matroid} is a typical instance of matroids. In a partition matroid, the ground set $E$ is partitioned into disjoint sets $E_1$, $E_2$,...,$E_l$ and 
$\mathcal{I} = \{A\subseteq E\mid |A\cap E_i|\leq \beta_i, \forall i=1,...,l\}$, for constant parameters $\beta_1, \beta_2,..., \beta_l$ \cite{partionmatroid}.

Likewise, $\{P_j\}$, $\{Q^k_i\}$ and $\{T^k_{ij}\}$ are three different partitions of the ground set $E$, and thus $\mathcal{M}_b=(E, \mathcal{I}_b)$, $\mathcal{M}_c=(E, \mathcal{I}_c)$ and $\mathcal{M}_d=(E, \mathcal{I}_d)$ are three partition matroids. $\mathcal{M}=(E, \mathcal{I})$ can be regarded as the intersection of $\mathcal{M}_b$, $\mathcal{M}_c$ and $\mathcal{M}_d$, which, according to \cite{matroid1}, is a matroid as well.
\end{proof}

\begin{cor}
\label{matroid_theo}
The constraints in \eqref{p1} are matroid constraints, and they are equivalent to the matroid $\mathcal{M}=(E, \mathcal{I})$.
\end{cor}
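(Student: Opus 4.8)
The plan is to obtain the corollary as an immediate consequence of the two preceding propositions, chaining them through the bijection between allocation matrices and subsets of the ground set. First I would recall the one-to-one correspondence established via \eqref{ground}, under which each feasible $\mathbf{X}$ for \eqref{p1} maps to a unique $A\subseteq E$ with $x^k_{ij}=1\Leftrightarrow f^k_{ij}\in A$. This identification lets me speak of the feasible region of \eqref{p1} as a family of subsets of $E$ rather than of $0$--$1$ matrices, which is the language in which the notion of a ``matroid constraint'' is phrased.

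Next I would separate the corollary's two assertions and match each to one proposition. The claim that the constraints are \emph{equivalent to} $\mathcal{I}=\mathcal{I}_b\cap\mathcal{I}_c\cap\mathcal{I}_d$ is precisely the content of \Cref{cons}, so nothing new is needed there: the arithmetic rewriting of $L_j(\mathbf{X})\le c_j$, $\sum_{j\in\mathbf{M}}x^k_{ij}\le 1$, and the connectivity bound into cardinality conditions $|A\cap P_j|\le\lfloor c_j/\lambda_0\rfloor$, $|A\cap Q^k_i|\le 1$, $|A\cap T^k_{ij}|\le t_{ij}$ already carries that half. For the remaining claim---that these are \emph{matroid} constraints---I would cite \Cref{matroid}, which verifies that $\mathcal{M}=(E,\mathcal{I})$ satisfies the matroid axioms, being the intersection of three partition matroids. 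Since ``matroid constraints'' means exactly that the feasible region coincides with the independent-set family of some matroid, the two propositions together establish both halves of the statement.

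I do not expect a genuine obstacle here, since the substantive work has already been discharged in \Cref{cons} and \Cref{matroid}; the corollary is a bookkeeping step that simply records their conjunction. The only point deserving a line of care is that the bijection must respect all three constraint families \emph{simultaneously}, so that the intersection $\mathcal{I}_b\cap\mathcal{I}_c\cap\mathcal{I}_d$ captures \eqref{LM:2}, \eqref{LM:3} and \eqref{LM:5} jointly rather than one constraint at a time---but this is immediate because the correspondence $f^k_{ij}\in A\Leftrightarrow x^k_{ij}=1$ is a single fixed identification under which every constraint is translated independently. With that observed, the corollary follows by combining the two references.
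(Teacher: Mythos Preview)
Your proposal is correct and matches the paper's approach exactly: the corollary is stated there without a separate proof, as an immediate consequence of \Cref{cons} (equivalence of the feasible region with $\mathcal{I}$) and \Cref{matroid} (matroid structure of $(E,\mathcal{I})$), which is precisely the chaining you describe. There is nothing to add.
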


\subsubsection*{Objective Function}
\label{object_sec}
Due to the one-to-one correspondence between $\mathbf{X}$ and $A$, we can define the objective function of \eqref{p1} accordingly as a set function \cite{setfunction} $F':2^E \rightarrow \mathbb{R}$.
\begin{pro}
\label{objective}
The objective function $F(\cdot)$ of \eqref{p1} is equivalent to the set function $F'(\cdot)$:
\begin{equation}
\label{submodular_def}
F'(A) = \mu_1 G'(A)+\mu_2 B'(A)+\mu_3 W'(A)
\end{equation}
,where
\begin{gather}
\nonumber G'(A)=\sum_{f^k_{ij}\in A}\lambda_0g_{ij}\\
\nonumber B'(A)=-\sum_{j\in \mathbf{M}}(c_j-\lambda_0 |A\cap P_j|)^2\\
\nonumber W'(A)=\sum_{f^k_{ij}\in A}w_{ij}
\end{gather}

\end{pro}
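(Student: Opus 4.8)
The plan is to exploit the one-to-one correspondence $f^k_{ij}\in A \Leftrightarrow x^k_{ij}=1$ established in the Ground Set discussion, and to verify the equality $F(\mathbf{X})=F'(A)$ term by term, i.e.\ to show separately that $G(\mathbf{X})=G'(A)$, $B(\mathbf{X})=B'(A)$, and $W(\mathbf{X})=W'(A)$ whenever $A$ and $\mathbf{X}$ correspond. Since both $F$ and $F'$ are the same fixed linear combination with weights $\mu_1,\mu_2,\mu_3$ of these three pieces, matching the pieces individually immediately yields the full claim.

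First I would dispatch the two linear terms. The key elementary observation is that for any coefficient $a_{ij}$ that does not depend on the value of the decision variable, the product $a_{ij}x^k_{ij}$ equals $a_{ij}$ exactly when $f^k_{ij}\in A$ and vanishes otherwise; hence summing $a_{ij}x^k_{ij}$ over all indices collapses to $\sum_{f^k_{ij}\in A}a_{ij}$. Applying this with $a_{ij}=\lambda_0 g_{ij}$ turns $G(\mathbf{X})$ into $G'(A)$, and applying it with $a_{ij}=w_{ij}$ turns $W(\mathbf{X})$ into $W'(A)$. Both reductions use nothing beyond the correspondence, together with the fact that $g_{ij}$ and $w_{ij}$ are independent of the sub-flow identifier $k$.

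The only term requiring a little care is the quadratic balancing term $B$, because it is nonlinear in $\mathbf{X}$. Here I would first rewrite the load of each CBS $j$ using the cardinality identity already derived in the proof of \Cref{cons}: since $P_j=\{f^k_{ij}\mid i\in\mathbf{F},k\in\mathbf{K_i}\}$, we have $L_j(\mathbf{X})=\lambda_0\sum_{i\in\mathbf{F}}\sum_{k\in\mathbf{K_i}}x^k_{ij}=\lambda_0|A\cap P_j|$. The point to emphasize is that this substitution is made inside the square, on the linear quantity $L_j$, before the nonlinearity is applied; once $L_j(\mathbf{X})$ is replaced by $\lambda_0|A\cap P_j|$ in each summand, $B(\mathbf{X})=-\sum_{j\in\mathbf{M}}(c_j-L_j(\mathbf{X}))^2$ becomes exactly $B'(A)=-\sum_{j\in\mathbf{M}}(c_j-\lambda_0|A\cap P_j|)^2$.

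Combining the three identities and reinstating the weights gives $F(\mathbf{X})=\mu_1 G'(A)+\mu_2 B'(A)+\mu_3 W'(A)=F'(A)$, the desired equivalence. I do not expect any genuine obstacle here: every step is a direct rewriting under the correspondence, and the nonlinearity of $B$ is harmless because the correspondence acts only on the linear load $L_j$ sitting inside the square. The single thing worth stating explicitly is that the coefficients $g_{ij}$ and $w_{ij}$ do not vary with $k$ nor with the decision variable itself, which is precisely what legitimizes collapsing the index sums into sums over the elements of $A$.
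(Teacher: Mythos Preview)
Your proposal is correct and follows essentially the same approach as the paper: both proofs use the correspondence $f^k_{ij}\in A \Leftrightarrow x^k_{ij}=1$ to verify the three identities $G(\mathbf{X})=G'(A)$, $W(\mathbf{X})=W'(A)$, and $B(\mathbf{X})=B'(A)$ separately, handling the linear terms by collapsing the index sums to sums over $A$ and handling the quadratic term via the load identity $L_j(\mathbf{X})=\lambda_0|A\cap P_j|$. Your write-up is slightly more explicit about why the nonlinearity in $B$ causes no trouble, but the argument is the same.
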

\begin{proof}[Proof]
Suppose $A$ and $\mathbf{X}$ are equivalent, which means for every $j\in\mathbf{M}$, $i\in\mathbf{F}$, and $i\in\mathbf{K_i}$, $f^k_{ij}\in A \Leftrightarrow x^k_{ij}=1$.

$$G(\mathbf{X})=\sum_{j\in\mathbf{M}}\sum_{i\in\mathbf{F}}\sum_{k\in\mathbf{K_i}}\lambda_0g_{ij}x^k_{ij}$$ indicates that $G(\mathbf{X})$ is added by $\lambda_0g_{ij}$ if $x^k_{ij}=1$, for which $G(\mathbf{X})=G'(A)$.

Similarly, we have $$W(\mathbf{X})=\sum_{j\in\mathbf{M}}\sum_{i\in\mathbf{F}}\sum_{k\in\mathbf{K_i}}w_{ij}x^k_{ij}=\sum_{f^k_{ij}\in A}w_{ij}=W'(A)$$.

It's noted that $\sum_{i\in\mathbf{F}}\sum_{k\in\mathbf{K_i}}x^k_{ij}$ quantifies how many variables $x^k_{ij}=1$ when $j$ is given, which is equal to the value $|A\cap P_j|$. Therefore, we have $$L_j(\mathbf{X})=\lambda_0\sum_{i\in\mathbf{F}}\sum_{k\in\mathbf{K_i}}x^k_{ij}=\lambda_0|A\cap P_j|$$, and thus $$B(\mathbf{X})=-\sum_{j\in\mathbf{M}}(c_j-L_j(\mathbf{X}))^2
=-\sum_{j\in \mathbf{M}}(c_j-\lambda_0 |A\cap P_j|)^2=B'(A)
$$.

Finally, we come to a conclusion that the objective function of \eqref{p1}, $$F(\mathbf{X})=\mu_1 G(\mathbf{X})+\mu_2 B(\mathbf{X})+\mu_3 W(\mathbf{X})$$, equals to $F'(A)$.
\end{proof}

\Cref{submodular} shows that the equivalent objective function \eqref{submodular_def} is a monotone submodular function. Therefore, \eqref{p1} can be seen as maximizing a monotone submodular function, with regards to the constraints.

\begin{pro}
\label{submodular}
$F(\cdot)$ defined in \eqref{submodular_def} is a monotone submodular function.
\end{pro}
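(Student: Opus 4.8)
The plan is to exploit the additive structure of $F'$ together with the standard fact that a nonnegative linear combination of monotone submodular set functions is again monotone submodular. Since $\mu_1,\mu_2,\mu_3\ge 0$, it suffices to establish monotonicity and submodularity for each of the three summands $G'$, $W'$, and $B'$ separately. First I would dispose of $G'$ and $W'$: both are \emph{modular} (additive) set functions, $G'(A)=\sum_{f^k_{ij}\in A}\lambda_0 g_{ij}$ and $W'(A)=\sum_{f^k_{ij}\in A}w_{ij}$, with coefficients $\lambda_0 g_{ij}\ge 0$ and $w_{ij}\ge 0$. A modular function satisfies the submodularity inequality with equality, and nonnegative coefficients make it nondecreasing, so each is monotone submodular with no further work.

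The substance of the argument is the quadratic penalty $B'(A)=-\sum_{j\in\mathbf{M}}(c_j-\lambda_0|A\cap P_j|)^2$. The key observation is that $\{P_j\}_{j\in\mathbf{M}}$ is a partition of the ground set $E$ (as already used in \Cref{matroid}), so $B'$ is \emph{separable} over the blocks: writing $\psi_j(n)=-(c_j-\lambda_0 n)^2$, we have $B'(A)=\sum_j \psi_j(|A\cap P_j|)$, and adding an element $e\in P_{j^\star}$ to a set changes only the single term indexed by $j^\star$. For submodularity I would invoke the general principle that a sum, over a partition, of concave functions of the block-intersection cardinalities is submodular; concretely, the discrete marginal $\psi_j(n+1)-\psi_j(n)=\lambda_0\bigl(2(c_j-\lambda_0 n)-\lambda_0\bigr)$ is decreasing in $n$ because $\psi_j$ is a downward parabola (its second difference is $-2\lambda_0^2<0$). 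Hence for $A\subseteq B$ the marginal of adding $e$ to $A$ dominates that of adding $e$ to $B$, which is exactly the diminishing-returns form of submodularity; this holds on all of $2^E$.

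The delicate step -- and the one I expect to be the main obstacle -- is monotonicity of $B'$, since $\psi_j$ is \emph{not} increasing everywhere: its marginal $\lambda_0\bigl(2(c_j-\lambda_0 n)-\lambda_0\bigr)$ is nonnegative only while $c_j-\lambda_0 n\ge \lambda_0/2$, and $-(c_j-\lambda_0 n)^2$ in fact decreases once $\lambda_0 n$ passes $c_j$. The resolution is to restrict attention to the feasible family: by the capacity matroid in \eqref{capacons}, any admissible set has $|A\cap P_j|\le \lfloor c_j/\lambda_0\rfloor$, so the last feasible addition occurs at $n=\lfloor c_j/\lambda_0\rfloor-1$, where $c_j-\lambda_0 n = (c_j-\lambda_0\lfloor c_j/\lambda_0\rfloor)+\lambda_0\ge \lambda_0>\lambda_0/2$; thus every marginal taken on a feasible set is strictly positive. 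Because the feasible family is downward closed, any $A\subseteq B$ within it can be reached by single feasible additions of positive marginal, giving $B'(A)\le B'(B)$ and hence monotonicity on the region where \Cref{greedy} operates. Combining the three pieces through the nonnegative-combination fact then yields that $F'$ is monotone submodular, as claimed. I would flag explicitly that monotonicity is precisely what forces the floor $\lfloor c_j/\lambda_0\rfloor$ in \eqref{capacons} into the analysis, so that the capacity bound and the curvature of the penalty interact correctly.
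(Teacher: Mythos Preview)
Your argument is correct and matches the paper's approach: both compute the marginal gain of adding a single element, use $g_{ij},w_{ij}\ge 0$ together with the capacity bound $\lambda_0|A\cap P_j|\le c_j$ from \eqref{capacons} to secure monotonicity, and read submodularity off the concavity of $n\mapsto -(c_j-\lambda_0 n)^2$. Your treatment is slightly more careful---you decompose $F'$ into its three summands and explicitly flag that monotonicity of $B'$ holds only on the feasible family, not on all of $2^E$---but the underlying mechanism is identical to the paper's.
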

\begin{proof}[Proof]
For simplicity, we use $F_A(i)$ to denote the marginal value $F(A\cup\{i\})-F(A)$.
A set function is \emph{monotone} if $\forall A\subseteq B\subseteq E, F(A)\leq F(B)$\cite{submodular1}.

(\emph{Monotonicity}) For any $A \subseteq E$ and $f^k_{ij}\in E\backslash A$, since $g_{ij}\ge 0$ (see \Cref{optgoal}), $w_{ij}\ge 0$ (see \Cref{optgoal}) and $\lambda_0|A\cap P_j|\le c_j$ (see \eqref{capacons}), we have
$$
F'_A(f^k_{ij})=\mu_1 \lambda_0 g_{ij}+\mu_2 \cdot 2\lambda_0(c_j-\lambda_0|A\cap P_j|)+\mu_3 w_{ij}\geq 0
$$

(\emph{Submodularity}) For any $B_1\subseteq B_2 \subseteq E$, if we take out the elements from $B_2\backslash B_1$ one by one and add them to $B_1$, the value of $F'(\cdot)$ will not decrease as $F'_A(f^k_{ij})\geq 0$ for any $A$, for which $F'(B_2)\geq F'(B_1)$. $F'(\cdot)$ is hence a monotone function. 

For all $A\subseteq B\subseteq E$ and all $f^k_{ij}\in E\backslash B$,
$$
F_A(f^k_{ij})-F_B(f^k_{ij})=
\mu_2 \cdot 2{\lambda_0}^2(|B\cap P_j|-|A\cap P_j|)
$$. We have $|A\cap P_j|\leq|B\cap P_j|$ because $A\subseteq B$, and finally we have $F_A(f^k_{ij})\geq F_B(f^k_{ij})$. 
Since a set function is \emph{submodular} if $F_A(i)\geq F_B(i)$ for all $A\subseteq B\subseteq E$ and all $i\in E\backslash B$ \cite{submodular1}, $F(\cdot)$ is a submodular function.
\end{proof}

\begin{cor}
\label{submodular_theo}
The objective function of \eqref{p1} is equivalent to a monotone submodular function, namely, $F(\cdot)$, defined in \eqref{submodular_def}.
\end{cor}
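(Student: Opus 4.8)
The plan is to derive this corollary directly by composing the two propositions just established, since the substantive work is already done. First I would invoke \Cref{objective}, which exhibits the value-preserving one-to-one correspondence $f^k_{ij}\in A \Leftrightarrow x^k_{ij}=1$ between allocation matrices $\mathbf{X}$ and subsets $A\subseteq E$, and shows that under it the objective is preserved term by term, $F(\mathbf{X})=F'(A)$, for the set function $F'$ defined in \eqref{submodular_def}. This step reduces any statement about the matrix-valued objective $F$ to the corresponding statement about the set function $F'$.

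Next I would apply \Cref{submodular}, which establishes that $F'(\cdot)$ is monotone---its marginal values $F'_A(f^k_{ij})$ are nonnegative, using $g_{ij}\ge 0$, $w_{ij}\ge 0$, and the feasibility bound $\lambda_0|A\cap P_j|\le c_j$ from \eqref{capacons}---and submodular, since the only set-dependent contribution to the marginal value is the concave load-penalty term $B'$, whose marginal is nonincreasing as the set grows. Chaining the two propositions, the objective of \eqref{p1} coincides, through the bijection of \Cref{objective}, with the set function $F'$, and $F'$ is monotone submodular; this is exactly the assertion of the corollary.

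Because both ingredients are already proved, there is no genuine computational obstacle here. The only point I would take care to state explicitly is that the correspondence of \Cref{objective} is value-preserving and carries feasible allocations to feasible sets and back, so the two formulations share identical feasible objective values and hence identical optima. Making this explicit is what legitimizes the subsequent reduction of \eqref{p1} to matroid-constrained monotone submodular maximization, and therefore the $3/4$ approximation guarantee claimed for \Cref{greedy}.
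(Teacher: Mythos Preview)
Your proposal is correct and matches the paper's approach: the corollary is stated immediately after \Cref{objective} and \Cref{submodular} without a separate proof, since it follows by direct composition of those two propositions exactly as you describe. Your extra remark about feasibility-preservation is not strictly needed for this corollary (it pertains more to \Cref{matroid_theo}), but it does no harm.
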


\subsubsection*{Equivalent model}
In conclusion, the formulation of the UTA problem \eqref{p1} is equivalent to the following model:
\begin{subequations}
\label{p3}
\begin{alignat}{2}
\max\quad & F(A)\label{MM:1}\\
\mbox{s.t.}\quad & A\in\mathcal{I}\label{MM:2}
\end{alignat}
\end{subequations}
\eqref{matroid_def} and \eqref{submodular_def} define the constraints (i.e.\ a set of feasible functions) $\mathcal{I}$, and the objective function $F(\cdot)$, in which $\mathcal{M}=(E, \mathcal{I})$ is a matroid and $F(\cdot)$ is a monotone submodular function.

\subsection{Optimality of \Cref{greedy}}
\label{optimality}

\cite{greedy} proves that when the matroid constraint $\mathcal{I}$ can be written as the intersection of $P$ matroids, i.e.\ $\mathcal{I}=\bigcap_{p=1}^P\mathcal{I}_p$, the greedy algorithm yields a tight approximation ratio of $\frac{P}{P+1}$.
Let $A^*$ be the optimal solution of \eqref{p1} and $A^G$ be the output of \Cref{greedy}. The approximation ratio means
$$
\frac{F'(A^*)-F'(A^G)}{F'(A^*)-F'(\emptyset)}\leq \frac{P}{P+1}
$$
. In our problem, $P=3$ as $\mathcal{I} = \mathcal{I}_b \cap \mathcal{I}_c \cap \mathcal{I}_d $, for which the greedy algorithm is supposed to yield a 3/4 approximation.


\end{document}